\newtheorem{theorem}{Theorem}
\newtheorem{proposition}{Proposition}
\newtheorem{lemma}{Lemma}
\newtheorem{corollary}{Corollary}
\theoremstyle{definition}
\newtheorem{definition}{Definition}
\newtheorem{assumption}{Assumption}
\theoremstyle{remark}
\newtheorem{remark}{Remark}
\newcommand{\bs}[1]{\boldsymbol{#1}}
\newcommand{\dd}{\mathrm{d}}
\newcommand{\ii}{\mathrm{i}}
\def\BibTeX{{\rm B\kern-.05em{\sc i\kern-.025em b}\kern-.08em
    T\kern-.1667em\lower.7ex\hbox{E}\kern-.125emX}}
\begin{document}

\title{Affine-Projection Recovery of Continuous Angular Power Spectrum: Geometry and Resolution
}

\author{
\IEEEauthorblockN{Shengsong Luo\IEEEauthorrefmark{1}, Chongbin Xu\IEEEauthorrefmark{1}, Xin Wang\IEEEauthorrefmark{1},
Ruilin Wu\IEEEauthorrefmark{2}, Junjie Ma\IEEEauthorrefmark{3}, Xiaojun Yuan\IEEEauthorrefmark{4}}
\IEEEauthorblockA{\IEEEauthorrefmark{1}Fudan University, Shanghai, China.
Email: {\{ssluo22@m., chbinxu@, xwang11@\}fudan.edu.cn}}
\IEEEauthorblockA{\IEEEauthorrefmark{2}Peking University, Beijing, China.
Email: rlwu22@stu.pku.edu.cn}
\IEEEauthorblockA{\IEEEauthorrefmark{3}Chinese Academy of Sciences, Beijing, China.
Email: majunjie@lsec.cc.ac.cn}
\IEEEauthorblockA{\IEEEauthorrefmark{4}University of Electronic Science and Technology of China, Chengdu, China.
Email: xjyuan@uestc.edu.cn}
}

\maketitle

\begin{abstract}
This paper considers recovering a continuous angular power spectrum (APS) from the channel covariance. Building on the projection-onto-linear-variety (PLV) algorithm, an affine-projection approach introduced by Miretti \emph{et. al.}, we analyze PLV in a well-defined \emph{weighted} Fourier-domain to emphasize its geometric interpretability. This yields an explicit fixed-dimensional trigonometric-polynomial representation and a closed-form solution via a positive-definite matrix, which directly implies uniqueness. 
We further establish an exact energy identity that yields the APS reconstruction error and leads to a sharp identifiability/resolution characterization: PLV achieves perfect recovery if and only if the ground-truth APS lies in the identified trigonometric-polynomial subspace; otherwise it returns the minimum-energy APS among all covariance-consistent spectra. 
\end{abstract}

\begin{IEEEkeywords}
Continuous angular power spectrum (APS), projection onto linear variety (PLV), channel covariance, trigonometric polynomials. 
\end{IEEEkeywords}

\section{Introduction}
	We consider a directional (angle-domain) channel model as in~\cite{Miretti-2021}. 
Specifically, we study a single-cell uplink where the base station (BS) employs an $M$-antenna uniform linear array (ULA) and each user has a single antenna. 
A classical spatial channel covariance $\bs R \in \mathbb C^{M \times M}$ gives~~\cite{Miretti-2021,bjornson2016massive,yin2014dealing,haghighatshoar2016massive,besson2005matched}.
\begin{equation}
	\label{eq: R_representation}
	\bs{R}_{}
	=
	\int_{\Theta}
	\rho_{\star}(\theta)\,
	\bs{a}(\theta)
	\bs{a}(\theta)^H 
	\dd\theta,
\end{equation}
where $\theta\in \Theta:=[-\pi/2,\pi/2]$ is the angle of arrival (AoA), $\rho_{\star}:\Theta \rightarrow \mathbb R_{+}$ denotes the angular power spectrum (APS) satisfying Assumption~\ref{assump: APS}, and $\bs{a}: \Omega \rightarrow \mathbb C^M$ is the frequency dependent antenna array responses:
\begin{align}
	\label{eq:steering}
	\bs{a}(\theta)
	:=[
	1,
	e^{\ii \kappa_1 \sin\theta},
	\ldots,
	e^{\ii \kappa_{M-1}\sin\theta}
	]^\top,
\end{align}
where $\kappa_m := \gamma \pi m$, $m \in \{0,1,\ldots,M-1\}$, with $\gamma=2d/\lambda_{f}$. Here, $d$ is the inter-element spacing of the ULA, and $\lambda_{f}$ is the wavelength of carrier frequency $f$.

The APS $\rho_{\star}$ is the primary quantity we aim to estimate. 
It provides a compact, second-order description of the propagation geometry (e.g., dominant directions, angular spread, and diffuse multipath) and is useful for long-term system design tasks such as beamforming~\cite{mohammadzadeh2022covariance,shakya2025angular}, user grouping~\cite{haghighatshoar2016massive}, and channel prediction~\cite{fan2018angle,wang2024channel,Bameri-2023}. 
Moreover, $\rho_{\star}$ typically varies on a much slower timescale than instantaneous channel realizations, which makes it well suited for long-term inference~\cite{Haghighatshoar-2018,Barzegar-2019,Bameri-2023}.

The covariance in model~\eqref{eq: R_representation} encodes the underlying continuous angular scattering spectrum $\rho_{\star}$.
This perspective is closely related to covariance-based random access, which considers a discretized counterpart of~\eqref{eq: R_representation} with random signatures~\cite{haghighatshoar2018improved,chen2021phase,fengler2021non}. 
In that setting, it is shown that the covariance is a sufficient statistic for estimating large-scale parameters (e.g., user activity and large-scale fading coefficients)~\cite[Theorem~1]{haghighatshoar2018improved}, supporting the general principle that second-order statistics can capture the relevant long-term information.

Miretti \emph{et al.}~\cite{Miretti-2021} proposed an influential estimator---termed projection onto a linear variety (PLV)---for the inverse problem of recovering a continuous APS from a spatial covariance matrix. 
Given a target covariance $\bs R$, PLV selects, among all spectra consistent with the covariance model~\eqref{eq: R_representation}, the one with minimum $L^2$-norm (cf. Definition~\ref{def: weighted-L2}):
\begin{equation}
	\rho_{\mathrm{plv}}
	=
	\arg\min_{\rho \in \mathcal{V}} \ \|\rho\|_{},
\end{equation}
where $\mathcal{V}$ is an affine subset of the Hilbert space $L^2(\Theta)$ defined by the covariance constraints
\begin{equation}
	\mathcal{V}
	:=
	\big\{
	\rho \in L^2(\Theta) \mid
	\bs{R}
	=
	\int_{\Theta}
	\rho(\theta)\,
	\bs{a}(\theta)\bs{a}(\theta)^H 
	\,\mathrm{d}\theta
	\big\}.
\end{equation}
Equivalently, $\rho_{\mathrm{plv}}$ is the orthogonal projection of the origin onto the affine set $\mathcal{V}$ in $L^2(\Theta)$~\cite{Miretti-2021}, which motivates the term ``projection onto a linear variety" and our use of ``affine projection" in the title.

Building upon the PLV framework of~\cite{Miretti-2021}, we revisit PLV in a \emph{weighted} Fourier-domain and sharpen its geometric structure. In particular, we make the following three aspects explicit (\emph{our main contributions}):
\begin{itemize}
	\item \textbf{Appearance of trigonometric polynomials.}
	We introduce a weighted Fourier-domain perspective that reveals a clean geometric decomposition of the covariance-consistency set: The feasible spectra form an affine set, and its orthogonal complement is a fixed-dimensional subspace, in terms of \emph{trigonometric polynomials}, determined by the array size. This turns an infinite-dimensional recovery problem into an explicit finite-dimensional characterization.
	
	\item \textbf{Equivalent PLV interpretations and its uniqueness.}
	We show that the PLV estimator admits several equivalent interpretations, and we derive a simple closed-form procedure to compute the solution. This also yields a clear uniqueness guarantee and makes the dependence on the observed covariance lags transparent.
	
	\item \textbf{Exact error decomposition and resolution.}
	We establish an exact energy/error decomposition that directly quantifies the reconstruction error of PLV. This leads to a sharp identifiability (resolution) statement: Perfect recovery holds precisely for spectra that lie in the identified subspace; otherwise PLV returns the minimum-energy spectrum among all covariance-consistent candidates, clarifying the fundamental aperture-limited resolution.
\end{itemize}

\subsection{Preliminaries and Notation}
Let $I\subseteq\mathbb R$ be a Lebesgue-measurable set and $\mathrm dx$ be the Lebesgue measure. We introduce the following preliminaries.

\begin{definition}
	\label{def: weighted-L2}
	\textit{(Weighted $L^{2}$ space)}.
	Let $w: I\to(0,\infty)$ be a Lebesgue-measurable weight function with $w>0$ almost everywhere.
	For measurable functions $f,g:I\to\mathbb R$, define the weighted inner product
	\[
	\langle f, g \rangle_{w} := \int_{I} f(x)\, g(x)\, w(x)\, \mathrm{d}x,
	\]
	whenever it is well-defined.
	Define the weighted $L^{2}$ space
	\begin{align*}
		L^{2}_{w}( I)
		:= & \bigl\{\, f: I\to\mathbb R \ \big|\ \langle f,f\rangle_{w} < \infty \,\bigr\} ,
	\end{align*}
	with the inner product $\langle \cdot,\cdot\rangle_w$ and norm
	\(
	\|f\|_{w}:=\sqrt{\langle f, f\rangle_{w}}.
	\)
	If $w(x)= 1$ on $I$, then $L_w^2(I)$ reduces to the standard space $L^2(I)$ with the usual inner product and norm 
	\(
	\|f\|_{}:=\sqrt{\langle f, f\rangle_{}}.
	\)
\end{definition}

\begin{definition}
	\label{def: projection}
	\emph{(Projection Operator).}
	Let $\mathcal{S}\subset L^2_w$ be a nonempty closed convex set. The
	\emph{projection} of $f\in L^2_w$ onto $\mathcal{S}$ is the (unique)
	element $\mathbb{P}_{\mathcal{S}}(f)\in\mathcal{S}$ that solves
	\begin{equation}
		\mathbb{P}_{\mathcal{S}}(f)
		:= \arg\min_{z\in\mathcal{S}} \,\|f - z\|_{w}^2.
	\end{equation}
	The mapping $\mathbb{P}_{\mathcal{S}}:L^2_w\to L^2_w$ is called the
	\emph{projection operator} onto $\mathcal{S}$. 
	In particular, if $\mathcal{S}$ is a closed subspace, then $\mathbb{P}_{\mathcal{S}}$
	is the orthogonal projection onto $\mathcal{S}$.
\end{definition}

\begin{assumption}
	\label{assump: APS}
	The APS $\rho_{\star} \in L^2(\Theta)$.
\end{assumption}

\textit{Organization.}
The rest of this paper is organized as follows. Section~\ref{sec:system-model} presents the transformed (weighted Fourier) domain formulation and the associated weighted $L_2$ geometric setup. Section~\ref{sec: main-results} develops the main geometric results and Section~\ref{sec: conclusion} concludes the paper.

\emph{Notation}. 
Throughout the paper, we use the typefaces $a, \bs{a}, \bs{A}$ and $\mathcal{A}$ to denote scalar/function, vector, matrix and set, respectively.
For a vector $\bs{a} \in \mathbb{R}^n$,
$\bs{a}(j_1:j_2)$ denotes the subvector of $\bs{a}$ consisting of its
$j_1$-th to $j_2$-th entries, where $0 \le j_1 \le j_2 \le n-1$; in
particular, $\bs{a}[j]$ denotes the $j$-th entry of $\bs{a}$, and we
also write $a_j$ when no ambiguity can arise. For a matrix
$\bs{A} \in \mathbb{R}^{m \times n}$, $[\bs{A}]_{i,j}$ denotes its
$(i,j)$-th entry, and
$\bs{A}(i_1\!:\!i_2,\,j_1\!:\!j_2)$ denotes the submatrix formed by
rows $i_1$ to $i_2$ and columns $j_1$ to $j_2$, where
$0 \le i_1 \le i_2 \le m-1$ and $0 \le j_1 \le j_2 \le n-1$.

	\section{Weighted Fourier-Domain Formulation}
\label{sec:system-model}

In this section, we reformulate the continuous APS recovery problem in a \emph{weighted} Fourier-domain that is more amenable to geometric analysis.
For a ULA, the covariance matrix $\bs R$ is Hermitian Toeplitz and positive semidefinite, and is therefore fully specified by its first column $\bm r\in\mathbb C^{M}$:
\begin{equation}
	\label{eq: r-vec-exp}
	\bm{r}_{} := [\bs{R}_{}]_{:,1} = \int_{\Theta}
	\rho_{\star}(\theta)\,
	\bs{a}(\theta) \dd \theta.
\end{equation}
Let $\bm r=[r_0,r_1,\ldots,r_{M-1}]^\top$. Then, the $m$-th covariance lag admits the integral form
\begin{equation}
	\label{eq: r_m exp}
	r_m
	=
	\int_{\Theta} \rho_{\star}(\theta)\, e^{\ii \kappa_m \sin\theta}\,\dd \theta,
	\quad m=0,\ldots,M-1,
\end{equation}
where $\kappa_m$ is defined in~\eqref{eq:steering} (with $r_0=\int_\Theta \rho_\star(\theta)\dd\theta$).

We next apply the change of variables $x=\sin\theta$ and define the transformed APS
\begin{equation}
	\label{eq: g_def}
	g_{\star}(x):= \rho_{\star}(\arcsin x), \quad x\in I:=[-1,1].
\end{equation}
Then~\eqref{eq: r_m exp} becomes
\begin{align}
	\label{eq: integral_g}
	r_m
	&=
	\int_{-1}^{1} g_{\star}(x)\, e^{\ii \kappa_m x}\, w(x)\,\dd x
	\nonumber\\
	&=
	\langle g_{\star},\, e^{\ii\kappa_m(\cdot)}\rangle_{w},
	\qquad m=0,\ldots,M-1,
\end{align}
where the Jacobian weight is
\begin{equation}
	\label{eq: weight}
	w(x)=\frac{1}{\sqrt{1-x^{2}}},
\end{equation}
and $\langle\cdot,\cdot\rangle_w$ is the weighted inner product in Definition~\ref{def: weighted-L2}.

\begin{remark}
	\label{rem: L2_to_L2w}
	Under Assumption~\ref{assump: APS}, $\rho_{\star}\in L^2(\Theta)$ implies $g_{\star}\in L_w^2(I)$ via the mapping~\eqref{eq: g_def}.
\end{remark}

Since $\rho_{\star}$ and $g_{\star}$ are in one-to-one correspondence, we will refer to either function as the APS when no confusion arises.
In the remainder of the paper, we work with the \emph{weighted} Fourier-domain representation~\eqref{eq: integral_g}.
This viewpoint reveals that the available covariance lags $\{r_m\}_{m=0}^{M-1}$ are weighted Fourier measurements of $g_{\star}$, and it will lead to a cleaner geometric characterization of the PLV estimator.

\section{Geometric Analysis of PLV in the Weighted Fourier-Domain Representation}
\label{sec: main-results}

We next revisit the PLV scheme and place it in the \emph{weighted} Fourier-domain representation. This leads to a arguably simple geometric interpretation of PLV 
and shows that PLV reconstructs the APS within a fixed-order
trigonometric polynomial model. 

\subsection{Geometry and Trigonometric-Polynomial Subspace}

In order to facilitate geometric analysis, we introduce the following Lemma.

\begin{lemma}[Affine structure and orthogonal decomposition]
	\label{lem: Affine-sub-space}
	Let $\bs{r} = [r_0,\ldots,r_{M-1}]^{\top} \in \mathbb{C}^{M}$ be a
	given covariance vector, and define $r_{-m} := r_m^{*}$ for
	$m=1,\ldots,M-1$. Consider the $L^2_w$ space in Definition~\ref{def: weighted-L2}
	and define the affine set
	\begin{align}
		\mathcal{V}_w
		&:= \big\{ f  \;|\;
		\langle f, e^{\ii \kappa_m (\cdot)} \rangle_w = r_m,
		\; -M < m < M \big\}, \label{eq: Vr-def}
	\end{align}
	together with the closed subspace (i.e., null space)
	\begin{align}
		\mathcal{N}
		&:= \big\{ f  \;|\;
		\langle f, e^{\ii \kappa_m (\cdot)} \rangle_w = 0,
		\; -M < m < M \big\}, \label{eq:N-def}
	\end{align}
	and its orthogonal complement
	\begin{align}
		\mathcal{N}_{\bot}
		&:= \{ f  \;|\;
		\langle f, h \rangle_w = 0,\;\forall\, h \in \mathcal{N} \}.
		\label{eq:N-perp-def}
	\end{align}
	Then the following two statements hold true:
	
	1) $\mathcal{V}_w$ is an affine subset of $L^2_w(I)$
	with direction $\mathcal{N}$, i.e., for any
	$f_1 \in \mathcal{V}_w$ and
	$f_2 \in \mathcal{N}$ we have $f_1 + f_2 \in \mathcal{V}_w$,
	and hence
	\(
	\mathcal{V}_w + \mathcal{N} = \mathcal{V}_w.
	\)
	
	2) The orthogonal complement $\mathcal{N}_{\bot}$ is a
	finite-dimensional subspace spanned by trigonometric
	polynomials, i.e.,
	\begin{equation}
		\label{eq:complement-space}
		\mathcal{N}_{\bot}
		= \big\{ f(\cdot;\bs{b}) \;|\;
		\bs{b} \in \mathbb{R}^{2M-1} \big\},
	\end{equation}
	in which
	\[
	f(x;\bs{b})
	= b_0
	+ \sum_{m=1}^{M-1}
	b_m \cos(\kappa_m x)
	+ b_{M-1+m} \sin(\kappa_m x) 
	\]
	with $\kappa_m$ denoting the spatial-frequency samples
	associated with the covariance lags (cf.~\eqref{eq:steering}).
\end{lemma}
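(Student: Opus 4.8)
The plan is to prove the two statements separately, treating the affine structure (part~1) as a direct consequence of linearity of the inner product, and the orthogonal-complement characterization (part~2) as the substantive claim requiring identification of a spanning set.

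For part~1, I would argue as follows. Fix $f_1 \in \mathcal{V}_w$ and $f_2 \in \mathcal{N}$. By linearity of $\langle\cdot,\cdot\rangle_w$ in its first argument, for each $m$ with $-M<m<M$ we have $\langle f_1+f_2, e^{\ii\kappa_m(\cdot)}\rangle_w = \langle f_1,e^{\ii\kappa_m(\cdot)}\rangle_w + \langle f_2,e^{\ii\kappa_m(\cdot)}\rangle_w = r_m + 0 = r_m$, so $f_1+f_2\in\mathcal{V}_w$; this gives $\mathcal{V}_w+\mathcal{N}\subseteq\mathcal{V}_w$, and the reverse inclusion is immediate since $0\in\mathcal{N}$. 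To finish the affine-subset claim I would fix any particular solution $f_0\in\mathcal{V}_w$ (its existence is guaranteed because $g_\star$ itself lies in $\mathcal{V}_w$ by~\eqref{eq: integral_g}) and show $\mathcal{V}_w = f_0 + \mathcal{N}$: if $f\in\mathcal{V}_w$ then $f-f_0$ annihilates every constraint and hence lies in $\mathcal{N}$. This exhibits $\mathcal{V}_w$ as a coset of the closed subspace $\mathcal{N}$, which is exactly the meaning of being an affine set with direction $\mathcal{N}$.

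For part~2, the key observation is that $\mathcal{N}$ is the common null space of the $2M-1$ bounded linear functionals $f\mapsto\langle f, e^{\ii\kappa_m(\cdot)}\rangle_w$. By the standard Hilbert-space fact that the orthogonal complement of a joint null space equals the closed span of the representers, $\mathcal{N}_\bot$ is the $L^2_w$-closure of $\mathrm{span}\{e^{\ii\kappa_m(\cdot)} : -M<m<M\}$. Since this is a finite set, the span is already closed and finite-dimensional. The remaining work is purely a real-basis reindexing: because we recover real-valued APS functions, I would pass to the real and imaginary parts, using $e^{\ii\kappa_m x}=\cos(\kappa_m x)+\ii\sin(\kappa_m x)$ together with the evenness $\cos(\kappa_{-m}x)=\cos(\kappa_m x)$ and oddness $\sin(\kappa_{-m}x)=-\sin(\kappa_m x)$ (recall $\kappa_{-m}=-\kappa_m$). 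The $m=0$ term contributes the constant $b_0$ (since $\kappa_0=0$), while the pairs $\pm m$ for $m=1,\ldots,M-1$ collapse to the $2(M-1)$ functions $\cos(\kappa_m x),\sin(\kappa_m x)$, giving exactly $1+2(M-1)=2M-1$ real basis functions. This matches the parametrization $f(x;\bs b)$ with $\bs b\in\mathbb{R}^{2M-1}$ in~\eqref{eq:complement-space}.

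The main obstacle to handle carefully is the interplay between the complex measurement functionals and the real-valued function space: the constraints are indexed by $m$ ranging over $-M<m<M$, but the conjugate-symmetry convention $r_{-m}=r_m^*$ means the $\pm m$ functionals are not independent over $\mathbb{R}$, so I must verify that the real span collapses to precisely $2M-1$ dimensions rather than naively counting $2M-1$ complex exponentials. I would make this rigorous by checking that $\{1,\cos(\kappa_m x),\sin(\kappa_m x)\}_{m=1}^{M-1}$ is linearly independent in $L^2_w(I)$ — which follows because the $\kappa_m=\gamma\pi m$ are distinct positive frequencies and the trigonometric system is independent on any interval of positive measure — thereby confirming that the dimension is exactly $2M-1$ and that $\mathcal{N}_\bot$ is genuinely finite-dimensional with the stated spanning set.
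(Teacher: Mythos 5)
Your proposal is correct and takes essentially the same route as the paper's proof: part~1 by linearity of the inner product together with the coset identification $\mathcal{V}_w=f_0+\mathcal{N}$, and part~2 by showing $\mathcal{N}$ is the orthogonal complement of the real span of $\{1,\cos(\kappa_m\cdot),\sin(\kappa_m\cdot)\}$ (via the real/imaginary decomposition of $e^{\ii\kappa_m x}$ and the symmetry $\kappa_{-m}=-\kappa_m$) and then using that finite-dimensional subspaces are closed so the double complement returns the span. Your added checks --- nonemptiness of $\mathcal{V}_w$ via $g_\star$ and linear independence of the trigonometric system --- are sound refinements that the paper leaves implicit.
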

\begin{proof}
	See Appendix~\ref{proof: Affine}.
\end{proof}


Lemma~\ref{lem: Affine-sub-space} shows that the set of functions $g \in L^2_w(I)$ that satisfy the covariance constraints is an affine set $\mathcal{V}_{w}$ with direction $\mathcal{N}$, and that $\mathcal{N}_{\perp}$ is a finite-dimensional subspace spanned by trigonometric polynomials.
The PLV can
thus be viewed as selecting the unique element in
$\mathcal{V}_w \cap \mathcal{N}_{\bot}$ (see Fig.~\ref{fig:PLV}). Building
on this geometric picture, the following Theorem summarizes
several equivalent characterizations of PLV.

\begin{theorem}
	\label{prop:PLV-equivalent}
	Recall the sets $\mathcal{V}_w$, $\mathcal{N}_{\bot}$ defined in
	Lemma~\ref{lem: Affine-sub-space}. Let $g_{\mathrm{plv}} \in L^2_w(I)$ denote the solution of the PLV~\cite{Miretti-2021}. Then the following four
	statements are equivalent and uniquely characterize $g_{\mathrm{plv}}$.
	
	1) \textbf{Minimum-norm formulation:}
	\begin{equation}
		g_{\mathrm{plv}}
		\;=\;
		\arg\min_{\mathring{g} \in \mathcal{V}_w}
		\|\mathring{g}\|_{w}
		\;=\;
		\mathbb{P}_{\mathcal{V}_w}(0).
		\label{eq:plv-min-norm}
	\end{equation}
	i.e., the minimum-norm
	element in $\mathcal{V}_w$.
	
	2) \textbf{Geometric feasibility formulation:}
	\begin{equation}
		g_{\mathrm{plv}}=\mathcal{V}_w \cap \mathcal{N}_{\bot},
		\label{eq:plv-intersection}
	\end{equation}
	i.e., $g_{\mathrm{plv}}$ is the unique element in the
	intersection set $\mathcal{V}_w \cap \mathcal{N}_{\bot}$.
	
	3) \textbf{Trigonometric polynomial parameterization:} 
	There is a unique vector
	$\bs{b} = [b_0,\ldots,b_{2M-2}]^{\top} \in \mathbb{R}^{2M-1}$
	such that
	\begin{equation}
		\label{eq:plv-trig-form}
		\begin{aligned}
			g_{\mathrm{plv}}(x;\bs{b})
			&:= 
			b_0 
			+ \sum_{m=1}^{M-1} 
			b_m \cos(\kappa_m x)
			+ b_{M-1+m} \sin(\kappa_m x) ,
		\end{aligned}
	\end{equation}
	with the vector being solved by:
	letting $r_{-m} := r_m^{*}$ for $m=-(M-1),\ldots,M-1$,
	\begin{equation}
		\begin{aligned}
			\text{find } \bs{b} \in \mathbb{R}^{2M-1}
			\text{s.t.}\quad
			\langle g_{\mathrm{plv}}(\cdot;\bs{b}),
			e^{\ii \kappa_m (\cdot)} \rangle_w
			= r_m. 
		\end{aligned}
		\label{eq:plv-trig-feas}
	\end{equation}
	
	4) \textbf{Closed-form coefficient representation:} 
	The vector $\bs{b}$ in~\eqref{eq:plv-trig-form} admits the
	closed form
	\begin{equation}
		\bs{b}
		=  \mathbf{G}^{-1} \bs{y},
		\label{eq:plv-closed-form}
	\end{equation}
	where $\bs{y}$ is defined by\footnote{We note that the imaginary of $r_0$ is trivial, i.e., $\Im r_0=0$ (refer to~\eqref{eq:steering}).}
	\begin{equation}
		\label{eq: y-def}
		\bs{y}:=\left[
		\begin{matrix}
			\Re \bs{r}_{0:M-1} \\
			\Im \bs{r}_{1:M-1}
		\end{matrix}
		\right] \in \mathbb R^{2M-1}
	\end{equation} 
	and
	$\mathbf{G}$ satisfies\footnote{Note that matrix $\mathbf G$ slightly differs from the one in~\cite[Proposition~1]{Miretti-2021}; in particular, the latter is rank-deficient, whereas $\mathbf G$ is full-rank in our setting.}
	\begin{equation}
		\mathbf{G}
		:=
		\begin{bmatrix}
			\mathbf{G}_{\Re} & \mathbf{0} \\
			\mathbf{0}       & \mathbf{G}_{\Im}
		\end{bmatrix} \in \mathbb{R}^{(2M-1)\times(2M-1)},
		\label{eq:plv-G-def}
	\end{equation}
	with $\mathbf{G}_{\Re} \in \mathbb{R}^{M\times M}$ and
	$\mathbf{G}_{\Im} \in \mathbb{R}^{(M-1)\times(M-1)}$
	given by
		\[
		\begin{split}
			[\mathbf{G}_{\Re}]_{m,n}
			&= \tfrac{\pi}{2}\big(J_0(\kappa_{m-n}) + J_0(\kappa_{m+n})\big), \\
			[\mathbf{G}_{\Im}]_{m^{\prime}-1,n^{\prime}-1}
			&= \tfrac{\pi}{2}\big(J_0(\kappa_{m^{\prime}-n^{\prime}}) - J_0(\kappa_{m^{\prime}+n^{\prime}})\big),
		\end{split}
		\]
		for $0 \le m,n \le M-1$ and $1 \le m^{\prime},n^{\prime} \le M-1$, respectively, and 
		$J_0(\cdot)$ denotes the Bessel function of the first kind
		of order zero.
		In particular, $\mathbf{G} \succ \bs{0}$ and hence invertible.
	\end{theorem}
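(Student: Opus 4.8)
The plan is to prove the four statements equivalent via the chain $(1)\Leftrightarrow(2)\Leftrightarrow(3)\Leftrightarrow(4)$, and then close the argument with the positive-definiteness of $\mathbf{G}$, which simultaneously certifies well-posedness and uniqueness across all four formulations.

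First I would handle $(1)\Leftrightarrow(2)$ by invoking the projection theorem in the Hilbert space $L^2_w(I)$ together with the affine structure from Lemma~\ref{lem: Affine-sub-space}. Each functional $f\mapsto\langle f,e^{\ii\kappa_m(\cdot)}\rangle_w$ is bounded (since $\|e^{\ii\kappa_m(\cdot)}\|_w^2=\int_I w\,\dd x=\pi<\infty$), and $g_{\star}$ satisfies the constraints, so $\mathcal{V}_w$ is a nonempty closed affine set with direction $\mathcal{N}$; hence the minimum-norm element $\mathbb{P}_{\mathcal{V}_w}(0)$ exists and is unique. Writing any $f_0\in\mathcal{V}_w$ as $f_0^{\parallel}+f_0^{\perp}$ with $f_0^{\parallel}\in\mathcal{N}$ and $f_0^{\perp}\in\mathcal{N}_{\bot}$, the orthogonal split $\|f_0^{\perp}+(f_0^{\parallel}+n)\|_w^2=\|f_0^{\perp}\|_w^2+\|f_0^{\parallel}+n\|_w^2$ over $n\in\mathcal{N}$ is minimized at $n=-f_0^{\parallel}$, so the minimizer is $f_0^{\perp}\in\mathcal{V}_w\cap\mathcal{N}_{\bot}$; since $\mathcal{N}\cap\mathcal{N}_{\bot}=\{0\}$, this intersection is a single point, giving $(1)\Leftrightarrow(2)$.

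Next, $(2)\Leftrightarrow(3)$ is a coordinate rewriting: Lemma~\ref{lem: Affine-sub-space} identifies $\mathcal{N}_{\bot}$ as exactly the trigonometric polynomials $f(\cdot;\bs{b})$, so $g_{\mathrm{plv}}\in\mathcal{N}_{\bot}$ means $g_{\mathrm{plv}}=g_{\mathrm{plv}}(\cdot;\bs{b})$ for some $\bs{b}\in\mathbb{R}^{2M-1}$, while $g_{\mathrm{plv}}\in\mathcal{V}_w$ is precisely the feasibility system~\eqref{eq:plv-trig-feas}. For $(3)\Leftrightarrow(4)$ I would reduce~\eqref{eq:plv-trig-feas} to $\mathbf{G}\bs{b}=\bs{y}$ by computing the weighted inner products of the basis against $e^{\ii\kappa_m(\cdot)}$. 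Because $g(\cdot;\bs{b})$ and $w$ are real and $r_{-m}=r_m^{*}$, the constraints for $-M<m<M$ collapse to $2M-1$ real equations: for $0\le m\le M-1$ the real part gives $\langle g(\cdot;\bs{b}),\cos(\kappa_m(\cdot))\rangle_w=\Re r_m$, and for $1\le m\le M-1$ the imaginary part gives $\langle g(\cdot;\bs{b}),\sin(\kappa_m(\cdot))\rangle_w=\Im r_m$. The cosine/sine cross terms vanish because their product is odd while $w$ is even, producing the block-diagonal structure; and with the product-to-sum identities $\kappa_m\pm\kappa_n=\kappa_{m\pm n}$ and the Bessel representation $\int_{-1}^{1}\cos(zx)\,w(x)\,\dd x=\pi J_0(z)$ (its odd companion integral vanishing), the entries reduce to the stated $\mathbf{G}_{\Re},\mathbf{G}_{\Im}$ and $\bs{y}$ to~\eqref{eq: y-def}, so $\bs{b}=\mathbf{G}^{-1}\bs{y}$ once $\mathbf{G}$ is invertible.

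The crux, and the step I expect to be the main obstacle, is $\mathbf{G}\succ\bs{0}$, since this is what repairs the rank deficiency of the original formulation and makes $\bs{b}$ (hence $g_{\mathrm{plv}}$) unique. The computation above exhibits $\mathbf{G}$ as the Gram matrix, in $L^2_w(I)$, of the ordered family $\{1,\cos(\kappa_1(\cdot)),\ldots,\cos(\kappa_{M-1}(\cdot)),\sin(\kappa_1(\cdot)),\ldots,\sin(\kappa_{M-1}(\cdot))\}$, so that $\bs{b}^{\top}\mathbf{G}\bs{b}=\|g(\cdot;\bs{b})\|_w^2\ge 0$. Positive definiteness is therefore equivalent to linear independence of this family: if $\|g(\cdot;\bs{b})\|_w^2=0$ then $g(\cdot;\bs{b})=0$ almost everywhere on $[-1,1]$, and since $g(\cdot;\bs{b})$ is real-analytic (a finite combination of the distinct exponentials $e^{\pm\ii\kappa_m(\cdot)}$), vanishing on an interval forces it to vanish identically; distinctness of the frequencies $\kappa_0<\kappa_1<\cdots<\kappa_{M-1}$ then forces $\bs{b}=\bs{0}$. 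Hence $\mathbf{G}\succ\bs{0}$ and is invertible, closing the equivalence chain and delivering uniqueness in every formulation. The delicate point is that the vanishing set is only $I=[-1,1]$ rather than a full period, so the argument must proceed through analyticity (the identity theorem) rather than a Fourier-orthogonality argument on the line.
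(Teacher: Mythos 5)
Your proposal is correct and follows essentially the same route as the paper: the chain $(1)\Leftrightarrow(2)\Leftrightarrow(3)\Leftrightarrow(4)$ via the Pythagorean/orthogonal decomposition from Lemma~\ref{lem: Affine-sub-space}, reduction of the constraints to $\mathbf{G}\bs{b}=\bs{y}$ with the Bessel identity, and positive definiteness of $\mathbf{G}$ as a Gram matrix. Your explicit analyticity argument for linear independence of the trigonometric family on $[-1,1]$ is a welcome refinement of a step the paper merely asserts, but it does not change the structure of the proof.
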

	
	\begin{proof}
		See Appendix~\ref{proof: proof-PLV}.
	\end{proof}

\begin{figure}
\centerline{\includegraphics[width=0.4\linewidth]{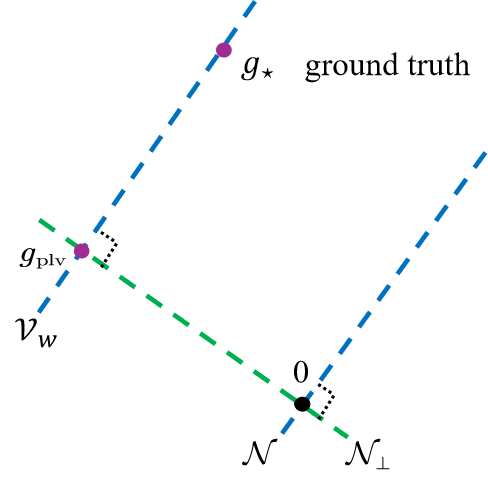}}
\caption{Geometric interpretation of the PLV. The affine set $\mathcal{V}_w$ collects all functions consistent with the covariance constraints, $\mathcal{N}$ denotes the associated null space, and $\mathcal{N}_{\bot}$ is its orthogonal
	complement.}
\label{fig:PLV}
\end{figure}
	
	\begin{remark}[Additional insight into the PLV formulation]
		Building upon the original PLV method in~\cite{Miretti-2021}, Theorem~\ref{prop:PLV-equivalent} makes
		two aspects of the method more explicit.
		\begin{itemize}
			\item 
			\emph{Appearance of trigonometric polynomials}.
			The explicit description of $\mathcal{N}_{\bot}$ shows that
			PLV reconstructs the APS within a fixed-order trigonometric polynomial
			model in the transformed domain $g_{\star} \in L^2_w(I)$. In particular, PLV selects the basis
			$\{1,\cos(\kappa_m x),\sin(\kappa_m x)\}$ for representing APS,
			which clarifies the underlying modeling assumptions on angular
			resolution and on the level of detail that can be captured in the
			reconstructed APS.
			\item
			\emph{Uniqueness}.
			The closed-form relation~\eqref{eq:plv-closed-form} reveals that the PLV
			estimate is linear to the covariance, with a
			matrix $\mathbf{G}$ that depends only on the array geometry and frequencies. Since $\mathbf{G} \succ\bs{0}$, the coefficients $\bs{b}$, and hence
			$g_{\mathrm{plv}}$, are uniquely determined by the covariance
			constraints. This makes the uniqueness of the PLV solution explicit.
		\end{itemize}
	\end{remark}
	
	\textit{Discussion}.
	Theorem~\ref{prop:PLV-equivalent} shows that PLV recovers the APS by projecting
	onto a fixed dimensional trigonometric polynomial subspace
	in the transformed domain $g \in L^2_w(I)$, yielding a unique solution
	that depends linearly on the covariance vector. 
	
	\subsection{Error Decomposition and Resolution}

	\begin{proposition}[Energy identity and reconstruction error]
		\label{prop:plv-energy}
		Let $g_{\mathrm{plv}}$ be the PLV solution as in~\eqref{eq:plv-trig-form}. Then
		\begin{equation}
			\|g_{\mathrm{plv}}\|_w^2 = \bs b^\top \mathbf{G} \bs b = \bs y^\top \mathbf {G}^{-1} \bs y,
		\end{equation}
		where $\bs b= \mathbf{G}^{-1} \bs y$ is given in~\eqref{eq:plv-closed-form}. Moreover, for any $g\in \mathcal{V}_w$,
		\begin{equation}
			\|g\|_w^2 = \|g_{\mathrm{plv}}\|_w^2 + \|g-g_{\mathrm{plv}}\|_w^2.
		\end{equation}
		In particular,
		\begin{equation}
			\|g_{\mathrm{plv}}-g_\star\|_w^2 = \|g_\star\|_w^2 - \bs y^\top \mathbf {G}^{-1} \bs y \ge 0,
		\end{equation}
		with equality if and only if $g_\star\in \mathcal{N}_\perp$.
	\end{proposition}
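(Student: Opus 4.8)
The plan is to read the whole proposition as a Hilbert-space projection statement in $L^2_w(I)$, so that every displayed claim reduces to two facts already available: (i) the matrix $\mathbf{G}$ in~\eqref{eq:plv-G-def} is exactly the Gram matrix of the trigonometric basis $\{1,\cos(\kappa_m\cdot),\sin(\kappa_m\cdot)\}$, and (ii) the orthogonal decomposition $g_{\mathrm{plv}}\in\mathcal{V}_w\cap\mathcal{N}_\perp$ established in Theorem~\ref{prop:PLV-equivalent}. Fact (i) yields the energy identity, while fact (ii) drives the Pythagorean splitting and the resolution characterization. A standing observation I would record first is that the ground truth obeys $g_\star\in\mathcal{V}_w$, since by~\eqref{eq: integral_g} it satisfies $\langle g_\star, e^{\ii\kappa_m(\cdot)}\rangle_w=r_m$ for every lag $|m|<M$.

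For the energy identity I would first note that $\mathbf{G}$ is the Gram matrix of the basis in the ordering used by $\bs b$. The block-diagonal structure reflects the vanishing cross inner products $\langle\cos(\kappa_m\cdot),\sin(\kappa_n\cdot)\rangle_w=0$ (an odd integrand against the even weight $w$ over the symmetric interval), while the diagonal blocks satisfy $[\mathbf{G}_\Re]_{m,n}=\langle\cos(\kappa_m\cdot),\cos(\kappa_n\cdot)\rangle_w$ and $[\mathbf{G}_\Im]_{m'-1,n'-1}=\langle\sin(\kappa_{m'}\cdot),\sin(\kappa_{n'}\cdot)\rangle_w$, evaluated through $\int_{-1}^{1}\cos(\kappa x)\,w(x)\,\dd x=\pi J_0(\kappa)$ and product-to-sum identities. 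Writing $g_{\mathrm{plv}}=\sum_i b_i\phi_i$ and using bilinearity gives $\|g_{\mathrm{plv}}\|_w^2=\sum_{i,j}b_i b_j\langle\phi_i,\phi_j\rangle_w=\bs b^\top\mathbf{G}\bs b$. Substituting $\bs b=\mathbf{G}^{-1}\bs y$ from~\eqref{eq:plv-closed-form} and using symmetry of $\mathbf{G}$ then produces $\bs b^\top\mathbf{G}\bs b=\bs y^\top\mathbf{G}^{-1}\bs y$, completing the first display.

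For the Pythagorean identity, fix any $g\in\mathcal{V}_w$. Since $g$ and $g_{\mathrm{plv}}$ both reproduce the lags, subtracting their constraints yields $\langle g-g_{\mathrm{plv}}, e^{\ii\kappa_m(\cdot)}\rangle_w=0$ for all $|m|<M$, i.e. $g-g_{\mathrm{plv}}\in\mathcal{N}$ directly from the definitions in Lemma~\ref{lem: Affine-sub-space}. Because $g_{\mathrm{plv}}\in\mathcal{N}_\perp$ by Theorem~\ref{prop:PLV-equivalent}, the cross term $\langle g_{\mathrm{plv}}, g-g_{\mathrm{plv}}\rangle_w$ vanishes, and expanding $\|g\|_w^2=\|g_{\mathrm{plv}}+(g-g_{\mathrm{plv}})\|_w^2$ gives the claimed splitting. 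Specializing to $g=g_\star$ and rearranging produces $\|g_{\mathrm{plv}}-g_\star\|_w^2=\|g_\star\|_w^2-\|g_{\mathrm{plv}}\|_w^2=\|g_\star\|_w^2-\bs y^\top\mathbf{G}^{-1}\bs y$, nonnegative as a squared norm. For the equality case, $\|g_{\mathrm{plv}}-g_\star\|_w=0$ is equivalent to $g_\star=g_{\mathrm{plv}}$ in $L^2_w$; since $g_{\mathrm{plv}}$ is the unique member of $\mathcal{V}_w\cap\mathcal{N}_\perp$ and $g_\star\in\mathcal{V}_w$ always, this holds precisely when $g_\star\in\mathcal{N}_\perp$.

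I expect the only step requiring genuine care to be the identification of $\mathbf{G}$ as the Gram matrix, and specifically confirming that the coefficient matrix of the feasibility system~\eqref{eq:plv-trig-feas} coincides with the matrix of basis inner products $\langle\phi_i,\phi_j\rangle_w$. This rests on the test functions $\{\cos(\kappa_m\cdot),\sin(\kappa_m\cdot)\}$ (the real and imaginary parts of $e^{\ii\kappa_m\cdot}$) being the same family as the expansion basis, together with the cross-block orthogonality. Once this is in place, the remaining claims are immediate consequences of the orthogonality $\mathcal{N}\perp\mathcal{N}_\perp$ and the uniqueness of the intersection point; the Bessel-integral evaluations and the positive-definiteness of $\mathbf{G}$ are routine and largely inherited from the proof of Theorem~\ref{prop:PLV-equivalent}.
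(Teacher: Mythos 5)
Your proposal is correct and follows essentially the same route as the paper: identify $\mathbf{G}$ as the Gram matrix of the trigonometric basis to get $\|g_{\mathrm{plv}}\|_w^2=\bs b^\top\mathbf{G}\bs b=\bs y^\top\mathbf{G}^{-1}\bs y$, then use $g_{\mathrm{plv}}\in\mathcal{N}_\perp$ together with $g-g_{\mathrm{plv}}\in\mathcal{N}$ for the Pythagorean splitting and the equality case. The only cosmetic difference is that the paper re-derives the orthogonality $g_{\mathrm{plv}}\perp\mathcal{N}$ from the first-order condition of the minimum-norm problem, whereas you invoke it directly from Theorem~\ref{prop:PLV-equivalent}; both are valid and equivalent.
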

	\begin{proof}
		See Appendix~\ref{proof: error}.
	\end{proof}
	
	Compared with prior set-theoretic analyses in~\cite{Cavalcante-2018}, our error study differs in two main aspects:
	(i) \cite{Cavalcante-2018} primarily bounds errors of specific {linear functionals} of the APS (task-dependent performance), whereas we bound the \emph{APS reconstruction error} itself in a \emph{weighted} $L^2$ geometry; 
	(ii) We leverage the explicit PLV geometry in the \emph{weighted} Fourier-domain to expose a \emph{resolution/identifiability} limit, while their bounds do not aim to characterize this resolution behavior.
	The following Corollary is a direct consequence of Proposition~\ref{prop:plv-energy}.

	\begin{corollary}[Resolution/Identifiability]
		\label{corollayr: PLV}
		The APS becomes {perfectly recoverable} via the PLV approach if and only if the ground-truth APS $g_{\star} \in \mathcal{N}_{\bot}$, i.e., is represented by the trigonometric polynomials in~\eqref{eq:plv-trig-form} with order up to $2M-2$ (a total of $2M-1$ degrees).
	\end{corollary}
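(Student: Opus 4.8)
The plan is to read this off directly from the energy identity in Proposition~\ref{prop:plv-energy}, since the corollary is precisely a restatement of its equality case together with the explicit description of $\mathcal{N}_{\bot}$ from Lemma~\ref{lem: Affine-sub-space}. First I would record the one feasibility fact that makes Proposition~\ref{prop:plv-energy} applicable to $g_\star$ itself: because the covariance vector $\bs r$ is generated by $g_\star$ through~\eqref{eq: integral_g}, the ground truth satisfies $\langle g_\star, e^{\ii\kappa_m(\cdot)}\rangle_w = r_m$ for all $-M<m<M$, i.e.\ $g_\star \in \mathcal{V}_w$. I would also fix the meaning of ``perfect recovery'' to be equality in $L^2_w(I)$, namely $\|g_{\mathrm{plv}}-g_\star\|_w = 0$ (equivalently $g_{\mathrm{plv}} = g_\star$ almost everywhere, hence $\rho_{\mathrm{plv}}=\rho_\star$ after the change of variables).

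With $g_\star \in \mathcal{V}_w$ in hand, I would invoke the last display of Proposition~\ref{prop:plv-energy}, which gives
\begin{equation}
\|g_{\mathrm{plv}}-g_\star\|_w^2 = \|g_\star\|_w^2 - \bs y^\top \mathbf{G}^{-1}\bs y \ge 0,
\end{equation}
with equality if and only if $g_\star \in \mathcal{N}_{\bot}$. Perfect recovery is exactly the statement that the left-hand side vanishes, so the equivalence
\[
g_{\mathrm{plv}} = g_\star \quad\Longleftrightarrow\quad g_\star \in \mathcal{N}_{\bot}
\]
follows immediately from the equality case. For the sufficiency direction one may alternatively argue geometrically: if $g_\star \in \mathcal{N}_{\bot}$ then, combined with $g_\star \in \mathcal{V}_w$, we have $g_\star \in \mathcal{V}_w \cap \mathcal{N}_{\bot}$, and Theorem~\ref{prop:PLV-equivalent}(2) identifies this intersection as the single point $g_{\mathrm{plv}}$, forcing $g_{\mathrm{plv}}=g_\star$.

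Finally I would translate the abstract membership $g_\star \in \mathcal{N}_{\bot}$ into the concrete trigonometric-polynomial description claimed in the corollary by quoting part~2 of Lemma~\ref{lem: Affine-sub-space}: $\mathcal{N}_{\bot}$ is exactly the $(2M-1)$-dimensional span of $\{1,\cos(\kappa_m x),\sin(\kappa_m x)\}_{m=1}^{M-1}$, so $g_\star \in \mathcal{N}_{\bot}$ is equivalent to $g_\star$ admitting the representation~\eqref{eq:plv-trig-form} for some $\bs b \in \mathbb{R}^{2M-1}$, i.e.\ a trigonometric polynomial of order up to $2M-2$ carrying $2M-1$ degrees of freedom.

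There is no substantive analytic obstacle here, as everything is a corollary of results already proved; the only points demanding care are bookkeeping rather than estimation: (i) confirming feasibility $g_\star \in \mathcal{V}_w$ so that the reconstruction-error identity applies to the ground truth rather than to an arbitrary feasible spectrum, and (ii) being explicit that identifiability is an $L^2_w$-almost-everywhere statement, so ``perfect recovery'' should be read modulo null sets (and correspondingly for $\rho_\star$ after the $x=\sin\theta$ change of variables).
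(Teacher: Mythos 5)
Your proposal is correct and follows essentially the same route as the paper, which presents the corollary as a direct consequence of the equality case in Proposition~\ref{prop:plv-energy} combined with the trigonometric-polynomial description of $\mathcal{N}_{\bot}$ from Lemma~\ref{lem: Affine-sub-space}. Your added bookkeeping (checking $g_\star\in\mathcal{V}_w$, fixing the $L^2_w$ meaning of perfect recovery) and the alternative argument via the singleton intersection $\mathcal{V}_w\cap\mathcal{N}_{\bot}$ in Theorem~\ref{prop:PLV-equivalent} are both sound and consistent with the paper's reasoning.
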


	Corollary~\ref{corollayr: PLV} yields an explicit identifiability (resolution) condition for \emph{continuous} APS recovery, thereby clarifying the fundamental \emph{resolution} limit induced by the finite aperture. To the best of our knowledge, explicit results for such resolution have not appeared in the literature.

    \section{Concluding Remarks}
\label{sec: conclusion}
We studied the recovery of a continuous APS from spatial covariance and revisited the PLV approach through a \emph{weighted} Fourier-domain Hilbert-space viewpoint. This geometric reformulation reveals that PLV operates over an explicit finite-dimensional trigonometric-polynomial model and leads to a closed-form solution characterized by a geometry-dependent, positive-definite matrix, making uniqueness transparent. Moreover, the resulting orthogonal decomposition provides an exact energy/error relation and yields a sharp identifiability (resolution) statement: PLV achieves exact recovery precisely when the true spectrum lies in the identified subspace; otherwise, it returns the minimum-energy spectrum among all covariance-consistent candidates. Future work includes extending the analysis to noisy or finite-sample covariance estimates and to more general array geometries (e.g., planar arrays).

	\appendices

\section{Proof of Lemma~\ref{lem: Affine-sub-space}}
\label{proof: Affine}
\emph{1) Affine structure.}
Let $f_1\in \mathcal V_w$ and $f_2\in \mathcal N$. For every $-M<m<M$, by linearity of the inner product in the first argument,
\[
\langle f_1+f_2, e^{i\kappa_m(\cdot)}\rangle_w
=\langle f_1, e^{i\kappa_m(\cdot)}\rangle_w+\langle f_2, e^{i\kappa_m(\cdot)}\rangle_w
=r_m,
\]
so $f_1+f_2\in \mathcal V_w$. Hence $\mathcal V_w+\mathcal N\subseteq \mathcal V_w$.
Conversely, fix any $f_1\in \mathcal V_w$ and take any $f\in \mathcal V_w$. Then for each $-M<m<M$,
\[
\langle f-f_1, e^{i\kappa_m(\cdot)}\rangle_w
=\langle f, e^{i\kappa_m(\cdot)}\rangle_w-\langle f_1, e^{i\kappa_m(\cdot)}\rangle_w
=0,
\]
so $f-f_1\in \mathcal N$, i.e., $f\in f_1+\mathcal N\subseteq \mathcal V_w+\mathcal N$.
Therefore $\mathcal V_w\subseteq \mathcal V_w+\mathcal N$, and we conclude
\(
\mathcal V_w+\mathcal N=\mathcal V_w,
\)
so $\mathcal V_w$ is an affine subset of $L^2_w(I)$ with direction $\mathcal N$.

\emph{2) Characterization of $\mathcal N_\perp$.}
Define the finite-dimensional subspace
\[
\mathcal S:=\operatorname{span}\bigl(\,1,\ \{\cos(\kappa_m \,\cdot)\}_{m=1}^{M-1},\ \{\sin(\kappa_m \,\cdot)\}_{m=1}^{M-1}\bigr)\subset L^2_w(I).
\]
We claim that $\mathcal N=\mathcal S^\perp$. Indeed, if $f\in \mathcal N$, then in particular
\[
\langle f,1\rangle_w=\langle f,e^{i\kappa_0(\cdot)}\rangle_w=0,
\]
and for each $m=1,\dots,M-1$ we have $\langle f,e^{i\kappa_m(\cdot)}\rangle_w=0$ and
$\langle f,e^{-i\kappa_m(\cdot)}\rangle_w=0$ (since the constraints hold for all $-M<m<M$).
Using the identities
\[
\begin{split}
	\cos(\kappa_m x)=\frac{e^{i\kappa_m x}+e^{-i\kappa_m x}}{2},
	\sin(\kappa_m x)=\frac{e^{i\kappa_m x}-e^{-i\kappa_m x}}{2i},
\end{split}
\]
and linearity of $\langle\cdot,\cdot\rangle_w$ in the second argument, we obtain
\[
\langle f,\cos(\kappa_m(\cdot))\rangle_w
=\frac{1}{2}\big(\langle f,e^{i\kappa_m(\cdot)}\rangle_w+\langle f,e^{-i\kappa_m(\cdot)}\rangle_w\big)=0,
\]
\[
\langle f,\sin(\kappa_m(\cdot))\rangle_w
=\frac{1}{2i}\big(\langle f,e^{i\kappa_m(\cdot)}\rangle_w-\langle f,e^{-i\kappa_m(\cdot)}\rangle_w\big)=0.
\]
Hence $f$ is orthogonal to each generator of $\mathcal S$, i.e., $f\in \mathcal S^\perp$, so $\mathcal N\subseteq \mathcal S^\perp$.
Conversely, if $f\in \mathcal S^\perp$, then $\langle f,1\rangle_w=0$ and for each $m=1,\dots,M-1$,
$\langle f,\cos(\kappa_m(\cdot))\rangle_w=0$ and $\langle f,\sin(\kappa_m(\cdot))\rangle_w=0$; therefore
\[
\langle f,e^{\pm i\kappa_m(\cdot)}\rangle_w
=\langle f,\cos(\kappa_m(\cdot))\rangle_w  \pm i\,\langle f,\sin(\kappa_m(\cdot))\rangle_w
=0,
\]
and also $\langle f,e^{i\kappa_0(\cdot)}\rangle_w=\langle f,1\rangle_w=0$. This is exactly the defining condition of $\mathcal N$,
so $\mathcal S^\perp\subseteq \mathcal N$. Hence $\mathcal N=\mathcal S^\perp$.

Taking orthogonal complements yields
\[
\mathcal N_\perp=\mathcal N^\perp=(\mathcal S^\perp)^\perp=\overline{\mathcal S}.
\]
As $\mathcal S$ is finite-dimensional, it is closed in $L^2_w(I)$, so $\overline{\mathcal S}=\mathcal S$ and thus $\mathcal N_\perp=\mathcal S$.
Equivalently, every $f\in \mathcal N_\perp$ can be written as
\[
\begin{split}
	f(x)&=b_0+\sum_{m=1}^{M-1} b_m \cos(\kappa_m x)+ b_{M-1+m}\sin(\kappa_m x)
	\\
	&:=f(x;b),
\end{split}
\]
for some $b\in\mathbb{R}^{2M-1}$, and conversely every such trigonometric polynomial belongs to $\mathcal N_\perp$.
This proves Lemma~\ref{lem: Affine-sub-space} and the finite-dimensionality claim.

\section{Proof of Theorem~\ref{prop:PLV-equivalent}}
\label{proof: proof-PLV}
We show the equivalence of the four statements in
Theorem~\ref{prop:PLV-equivalent} by linking them pairwise.
Throughout the proof, we use the sets
$\mathcal{V}_w, \mathcal{N}, \mathcal{N}_{\bot}$ defined in
Lemma~\ref{lem: Affine-sub-space}.

\textit{1) $\Leftrightarrow$ 2): minimum norm vs. affine intersection.}
By Lemma~\ref{lem: Affine-sub-space}, the feasible set
$\mathcal{V}_w$ is an affine subset of $L^2_w(I)$ with
direction $\mathcal{N}$, i.e., $\mathcal{V}_w + \mathcal{N}
= \mathcal{V}_w$. In particular, if $g_0 \in \mathcal{V}_w$
is any fixed feasible point, then every $\mathring{g} \in \mathcal{V}_w$
can be written uniquely as
\[
\mathring{g} = g_0 + h, \quad h \in \mathcal{N}.
\]

Assume first that statement~2) holds, i.e.,
$\mathcal{V}_w \cap \mathcal{N}_{\bot} = \{g_{\mathrm{plv}}\}$.
For any $g \in \mathcal{V}_w$, write
$\mathring{g} = g_{\mathrm{plv}} + h$ with $h \in \mathcal{N}$. Since
$g_{\mathrm{plv}} \in \mathcal{N}_{\bot}$ and $h \in \mathcal{N}$,
we have $\langle g_{\mathrm{plv}}, h \rangle_w = 0$ and hence
\[
\|\mathring{g}\|_w^2
= \|g_{\mathrm{plv}} + h\|_w^2
= \|g_{\mathrm{plv}}\|_w^2 + \|h\|_w^2
\ge \|g_{\mathrm{plv}}\|_w^2,
\]
with equality only if $h = 0$, i.e., $\mathring{g} = g_{\mathrm{plv}}$.
Thus $g_{\mathrm{plv}}$ is the unique minimum-norm element in
$\mathcal{V}_w$, which proves statement~1).

Conversely, suppose $g_{\mathrm{plv}}$ is the unique solution of the
minimum-norm problem~\eqref{eq:plv-min-norm}. Decompose
$g_{\mathrm{plv}}$ orthogonally with respect to $\mathcal{N}$ as
\[
g_{\mathrm{plv}} = g_{\perp} + h,
\quad g_{\perp} \in \mathcal{N}_{\bot},\; h \in \mathcal{N}.
\]
Since $\mathcal{V}_w$ is an affine set with direction
$\mathcal{N}$, we have $g_{\perp} = g_{\mathrm{plv}} - h \in
\mathcal{V}_w$. Moreover,
\[
\|g_{\perp}\|_w^2
= \|g_{\mathrm{plv}} - h\|_w^2
= \|g_{\mathrm{plv}}\|_w^2 - \|h\|_w^2
\le \|g_{\mathrm{plv}}\|_w^2,
\]
with strict inequality whenever $h \ne 0$. By the optimality of
$g_{\mathrm{plv}}$, it must hold that $h = 0$, and hence
$g_{\mathrm{plv}} = g_{\perp} \in \mathcal{N}_{\bot}$. Combined with
$g_{\mathrm{plv}} \in \mathcal{V}_w$, this implies
$g_{\mathrm{plv}} \in \mathcal{V}_w \cap \mathcal{N}_{\bot}$.
Finally, if there were two distinct elements
$g_1, g_2 \in \mathcal{V}_w \cap \mathcal{N}_{\bot}$, then
$g_1 - g_2 \in \mathcal{N} \cap \mathcal{N}_{\bot} = \{0\}$, so
$g_1 = g_2$. Hence $\mathcal{V}_w \cap \mathcal{N}_{\bot}$ is
a singleton, and statement~2) holds. This proves $1) \Leftrightarrow 2)$.

\textit{2) $\Leftrightarrow$ 3): geometric formulation vs. trigonometric parameterization.}
Lemma~\ref{lem: Affine-sub-space} further shows that
$\mathcal{N}_{\bot}$ is a $(2M-1)$-dimensional subspace spanned by
the trigonometric basis
\[
\big\{ 1,\,
\cos(\kappa_m x),\,
\sin(\kappa_m x)
\;\big|\;
m = 1,\ldots,M-1 \big\},
\]
so every $g \in \mathcal{N}_{\bot}$ admits a unique representation of
the form~\eqref{eq:plv-trig-form} with some
$\bs{b} \in \mathbb{R}^{2M-1}$. In particular, any
$g_{\mathrm{plv}} \in \mathcal{V}_w \cap \mathcal{N}_{\bot}$
must be representable as in~\eqref{eq:plv-trig-form}, and the
constraints defining $\mathcal{V}_w$ translate exactly into the
feasibility conditions~\eqref{eq:plv-trig-feas}. This yields
statement~3).

Conversely, if $g_{\mathrm{plv}}$ admits the representation
\eqref{eq:plv-trig-form} with a vector $\bs{b}$ that
satisfies~\eqref{eq:plv-trig-feas}, then by construction
$g_{\mathrm{plv}} \in \mathcal{N}_{\bot}$ and
$\langle g_{\mathrm{plv}}, e^{\ii\kappa_m(\cdot)} \rangle_w = r_m$
for all $-M < m < M$, i.e., $g_{\mathrm{plv}} \in \mathcal{V}_w$.
Uniqueness of $\bs{b}$ (and hence of $g_{\mathrm{plv}}$) follows from
the linear independence of the trigonometric basis. Therefore
$\mathcal{V}_w \cap \mathcal{N}_{\bot}
= \{g_{\mathrm{plv}}\}$, and statement 2) holds. This proves
$2) \Leftrightarrow 3)$.

\textit{3) $\Leftrightarrow$ 4): trigonometric form vs. closed-form coefficients.}
Substituting the expansion~\eqref{eq:plv-trig-form} into the
covariance constraints
$\langle g_{\mathrm{plv}}, e^{\ii\kappa_m(\cdot)} \rangle_w = r_m$,
$-M < m < M$, and separating real and imaginary parts yield a linear
system of the form
\begin{equation}
	\label{eq:plv-linear-system}
	\mathbf{G}\,\bs{b} = \bs{y},
\end{equation}
where $\bs{y}$ is defined
in~\eqref{eq: y-def} and $\mathbf{G}$ is the block-diagonal matrix in~\eqref{eq:plv-G-def}. The entries of $\mathbf{G}_{\Re}$ and $\mathbf{G}_{\Im}$
can be written as
\[
\begin{split}
	[\mathbf{G}_{\Re}]_{m,n}
	&= \langle \cos(\kappa_m \cdot), \cos(\kappa_n \cdot) \rangle_w,
	\\
	[\mathbf{G}_{\Im}]_{m-1,n-1}
	&= \langle \sin(\kappa_m \cdot), \sin(\kappa_n \cdot) \rangle_w.
\end{split}
\]
with (via the identity$\int_{0}^{\pi} \cos(z\cos\theta)\dd \theta=\pi J_0(z)$)
\[
\begin{split}
	\langle \cos(\kappa_m \cdot), \cos(\kappa_n \cdot) \rangle_w &= \tfrac{\pi}{2}\big(J_0(\kappa_{m-n}) + J_0(\kappa_{m+n})\big),
	\\
	\langle \sin(\kappa_m \cdot), \sin(\kappa_n \cdot) \rangle_w &= \tfrac{\pi}{2}\big(J_0(\kappa_{m^{\prime}-n^{\prime}}) - J_0(\kappa_{m^{\prime}+n^{\prime}})\big).
\end{split}
\]
For any $\bs{c} \in \mathbb{R}^{M}$, we have
\[
\bs{c}^{\top} \mathbf{G}_{\Re} \bs{c}
= \big\| \sum_{m=0}^{M-1} c_m \cos(\kappa_m \cdot) \big\|_w^2
\ge 0,
\]
with equality if and only if
$\sum_{m=0}^{M-1} c_m \cos(\kappa_m x) = 0$ for all $x \in I$,
which by linear independence of the cosine functions implies
$\mathbf{c} = \mathbf{0}$. Hence $\mathbf{G}_{\Re}$ is symmetric
positive definite. An analogous argument shows that
$\mathbf{G}_{\Im}$ is symmetric positive definite as well. Thus
$\mathbf{G}$ is positive definite and invertible, and
\eqref{eq:plv-linear-system} has the unique solution
$\bs{b} =  \mathbf{G}^{-1} \bs{y}$, which gives the
closed-form expression~\eqref{eq:plv-closed-form}. This proves
statement~4) given statement~3).


Conversely, if $\bs{b} = \mathbf{G}^{-1} \bs{y}$, then
\eqref{eq:plv-linear-system} holds, and thus 
$g_{\mathrm{plv}}(\cdot;\bs{b})$ of the
form~\eqref{eq:plv-trig-form} satisfies covariance constraints in $\mathcal{V}_w$
and belongs to $\mathcal{N}_{\bot}$. This recovers statement~3),
and hence $3) \Leftrightarrow 4)$.

Combining the implications $1) \Leftrightarrow 2)$,
$2) \Leftrightarrow 3)$, and $3) \Leftrightarrow 4)$ yields the
equivalence of all four characterizations.

\section{Proof of Proposition~\ref{corollayr: PLV}}
\label{proof: error}
For any $x \in I=[-1,1]$, denote by 
\[
\bm \phi(x) :=
[ 1, \ldots, \cos(\kappa_m x),\ldots, \sin(\kappa_m x),\ldots\,]^\top \in \mathbb R^{2M-1},
\]
and recall Theorem~\ref{prop:PLV-equivalent} that
\(
g_{\mathrm{plv}}(x)=\bm b^T\bm\phi(x)
\)
with $\bs b=\mathbf{G}^{-1}\bs y$, where the matrix is
\(
\mathbf{G}=\int_I \bm\phi(x)\bm\phi(x)^\top w(x)\,dx.
\)
Then
\begin{align*}
	\|g_{\mathrm{plv}}\|_w^2
	&=\langle g_{\mathrm{plv}},g_{\mathrm{plv}}\rangle_w
	\\
	&=\bm b^T\!\big(\int_I \bm\phi(x)\bm\phi(x)^T w(x)\,dx\big)\!\bm b
	\\
	&=\bm b^T \mathbf{G}\bm b
	=\bm y^T \mathbf{G}^{-1}\bm y,
\end{align*}
where the last equality uses $\bs b=\mathbf{G}^{-1}\bs y$ and $\mathbf{G}=\mathbf{G}^\top \succ \bs 0$.

Next, by Theorem~\ref{prop:PLV-equivalent}, $g_{\mathrm{plv}}=\mathbb P_{\mathcal{V}_w}(0)$, i.e., it minimizes $\|g\|_w$ over the affine set $\mathcal{V}_w=g_{\mathrm{plv}}+\mathcal{N}$.
Hence for any $h\in \mathcal{N}$, the function $f(t):=\|g_{\mathrm{plv}}+th\|_w^2$ is minimized at $t=0$, which implies
\(
0=f'(0)=2\langle g_{\mathrm{plv}},h\rangle_w
\)
for all $h\in N$, i.e., $g_{\mathrm{plv}}\perp \mathcal{N}$.
Therefore, for any $g\in \mathcal{V}_w$ we can write $g=g_{\mathrm{plv}}+h$ with $h\in \mathcal{N}$, and
\[
\begin{split}
\|g\|_w^2
&=\|g_{\mathrm{plv}}+h\|_w^2
=\|g_{\mathrm{plv}}\|_w^2+\|h\|_w^2
\\
&=\|g_{\mathrm{plv}}\|_w^2+\|g-g_{\mathrm{plv}}\|_w^2.
\end{split}
\]
In particular, for $g_\star\in V_r$,
\(
\|g_{\mathrm{plv}}-g_\star\|_w^2
=\|g_\star\|_w^2-\bm y^T \mathbf{G}^{-1}\bm y.
\)
This completes the proof.

\bibliographystyle{IEEEtran}
\bibliography{IEEEabrv,refs1}

\begin{thebibliography}{10}
\providecommand{\url}[1]{#1}
\csname url@samestyle\endcsname
\providecommand{\newblock}{\relax}
\providecommand{\bibinfo}[2]{#2}
\providecommand{\BIBentrySTDinterwordspacing}{\spaceskip=0pt\relax}
\providecommand{\BIBentryALTinterwordstretchfactor}{4}
\providecommand{\BIBentryALTinterwordspacing}{\spaceskip=\fontdimen2\font plus
\BIBentryALTinterwordstretchfactor\fontdimen3\font minus
  \fontdimen4\font\relax}
\providecommand{\BIBforeignlanguage}[2]{{%
\expandafter\ifx\csname l@#1\endcsname\relax
\typeout{** WARNING: IEEEtran.bst: No hyphenation pattern has been}%
\typeout{** loaded for the language `#1'. Using the pattern for}%
\typeout{** the default language instead.}%
\else
\language=\csname l@#1\endcsname
\fi
#2}}
\providecommand{\BIBdecl}{\relax}
\BIBdecl

\bibitem{Miretti-2021}
L.~Miretti, R.~L.~G. Cavalcante, and S.~Sta{\'n}czak, ``Channel covariance
  conversion and modelling using infinite dimensional {Hilbert} spaces,''
  \emph{IEEE Trans. Signal Process.}, vol.~69, pp. 3145--3159, 2021.

\bibitem{bjornson2016massive}
E.~Bj{\"o}rnson, E.~G. Larsson, and T.~L. Marzetta, ``Massive {MIMO}: {Ten}
  myths and one critical question,'' \emph{IEEE Commun. Mag.}, vol.~54, no.~2,
  pp. 114--123, 2016.

\bibitem{yin2014dealing}
H.~Yin, D.~Gesbert, and L.~Cottatellucci, ``Dealing with interference in
  distributed large-scale {MIMO} systems: A statistical approach,'' \emph{IEEE
  J. Sel. Topics Signal Process.}, vol.~8, no.~5, pp. 942--953, 2014.

\bibitem{haghighatshoar2016massive}
S.~Haghighatshoar and G.~Caire, ``Massive {MIMO} channel subspace estimation
  from low-dimensional projections,'' \emph{IEEE Trans. Signal Process.},
  vol.~65, no.~2, pp. 303--318, 2016.

\bibitem{besson2005matched}
O.~Besson, L.~L. Scharf, and F.~Vincent, ``Matched direction detectors and
  estimators for array processing with subspace steering vector
  uncertainties,'' \emph{IEEE Trans. Signal Process.}, vol.~53, no.~12, pp.
  4453--4463, 2005.

\bibitem{mohammadzadeh2022covariance}
S.~Mohammadzadeh, V.~H. Nascimento, R.~C. de~Lamare, and O.~Kukrer,
  ``Covariance matrix reconstruction based on power spectral estimation and
  uncertainty region for robust adaptive beamforming,'' \emph{IEEE Trans.
  Aerosp. Electron. Syst.}, vol.~59, no.~4, pp. 3848--3858, 2022.

\bibitem{shakya2025angular}
D.~Shakya, M.~Ying, and T.~S. Rappaport, ``Angular spread statistics for 6.75
  {GHz} {FR1} {(C)} and 16.95 {GHz FR3} mid-band frequencies in an indoor
  hotspot environment,'' in \emph{Proc. IEEE WCNC}, 2025, pp. 1--6.

\bibitem{fan2018angle}
D.~Fan, F.~Gao, Y.~Liu, Y.~Deng, G.~Wang, Z.~Zhong, and A.~Nallanathan, ``Angle
  domain channel estimation in hybrid millimeter wave massive {MIMO} systems,''
  \emph{IEEE Trans. Wireless Commun.}, vol.~17, no.~12, pp. 8165--8179, 2018.

\bibitem{wang2024channel}
X.~Wang, Y.~Shi, T.~Wang, Y.~Huang, Z.~Hu, L.~Chen, and Z.~Jiang, ``Channel
  knowledge map aided channel prediction with measurements-based evaluation,''
  \emph{IEEE Transs Commun.}, 2024.

\bibitem{Bameri-2023}
S.~Bameri, K.~A. Almahrog, R.~H. Gohary, A.~El-Keyi, and Y.~A.~E. Ahmed,
  ``Uplink to downlink channel covariance transformation in {FDD} systems,''
  \emph{IEEE Trans. Signal Process.}, vol.~71, pp. 3196--3212, 2023.

\bibitem{Haghighatshoar-2018}
S.~Haghighatshoar, M.~B. Khalilsarai, and G.~Caire, ``Multi-band covariance
  interpolation with applications in massive {MIMO},'' in \emph{Proc. IEEE
  ISIT}, 2018, pp. 386--390.

\bibitem{Barzegar-2019}
M.~B. Khalilsarai, S.~Haghighatshoar, X.~Yi, and G.~Caire, ``{FDD} massive
  {MIMO} via {UL/DL} channel covariance extrapolation and active channel
  sparsification,'' \emph{IEEE Trans. Wireless Commun.}, vol.~18, no.~1, pp.
  121--135, 2019.

\bibitem{haghighatshoar2018improved}
S.~Haghighatshoar, P.~Jung, and G.~Caire, ``Improved scaling law for activity
  detection in massive {MIMO} systems,'' in \emph{Proc. IEEE ISIT}, 2018, pp.
  381--385.

\bibitem{chen2021phase}
Z.~Chen, F.~Sohrabi, Y.-F. Liu, and W.~Yu, ``Phase transition analysis for
  covariance-based massive random access with massive {MIMO},'' \emph{IEEE
  Trans. Inf. Theory}, vol.~68, no.~3, pp. 1696--1715, 2021.

\bibitem{fengler2021non}
A.~Fengler, S.~Haghighatshoar, P.~Jung, and G.~Caire, ``Non-{Bayesian} activity
  detection, large-scale fading coefficient estimation, and unsourced random
  access with a massive {MIMO} receiver,'' \emph{IEEE Trans. Inf. Theory},
  vol.~67, no.~5, pp. 2925--2951, 2021.

\bibitem{Cavalcante-2018}
R.~L.~G. Cavalcante, L.~Miretti, and S.~Sta{\'n}czak, ``Error bounds for {FDD}
  massive {MIMO} channel covariance conversion with set-theoretic methods,'' in
  \emph{Proc. IEEE GLOBECOM}, 2018, pp. 1--7.

\end{thebibliography}

\end{document}